\newtheorem{theorem}{Theorem}[section]
\newtheorem{lemma}[theorem]{Lemma}
\newtheorem{corollary}[theorem]{Corollary}
\newcommand{\ml}{\mathcal}
\newcommand{\s}{\subseteq}
\newcommand{\bi}{\binom}
\newcommand{\fr}{\frac}
\newcommand{\lc}{\lceil}
\newcommand{\rc}{\rceil}
\begin{document}

\title{New bounds on the number of tests for disjunct matrices}
\author{Chong Shangguan and Gennian Ge
\thanks{The research of G. Ge was supported by the National Natural Science Foundation of China under Grant Nos. 11431003 and 61571310.
}
\thanks{C. Shangguan is with the School of Mathematical Sciences, Zhejiang University,
Hangzhou 310027,  China (e-mail: theoreming@163.com).}
\thanks{G. Ge is with the School of Mathematical Sciences, Capital Normal University,
Beijing 100048, China (e-mail: gnge@zju.edu.cn). He is also with Beijing Center for Mathematics and Information Interdisciplinary Sciences, Beijing, 100048, China.}
\thanks{Copyright (c) 2014 IEEE. Personal use of this material is permitted. However, permission to use this material for any other purposes must be obtained from the IEEE by sending a request to pubs-permissions@ieee.org.}
}
\maketitle

\begin{abstract}Given $n$ items with at most $d$ of which being positive, instead of testing these items individually, the theory of combinatorial group testing aims to identify all positive items using as few tests as possible. This paper is devoted to a fundamental and thirty-year-old problem in the nonadaptive group testing theory. A binary matrix is called $d$-disjunct if the boolean sum of arbitrary $d$ columns does not contain another column not in this collection. Let $T(d)$ denote the minimal $t$ such that there exists a $t\times n$ $d$-disjunct matrix with $n>t$. $T(d)$ can also be viewed as the minimal $t$ such that there exists a nonadaptive group testing scheme which is better than the trivial one that tests each item individually. It was known that $T(d)\ge\bi{d+2}{2}$ and was conjectured that $T(d)\ge(d+1)^2$. In this paper we narrow the gap by proving $T(d)/d^2\ge(15+\sqrt{33})/24$, a quantity in [6/7,7/8].

\end{abstract}

\begin{keywords}
nonadaptive group testing, disjunct matrix, graph matching number.
\end{keywords}

\section{Introduction}

Given $n$ items with at most $d$ of which being positive, instead of testing these items individually, the theory of combinatorial group testing aims to identify all positive items using as few tests as possible. Its history can date back to World War \uppercase\expandafter{\romannumeral2} when the biologists needed to identify people with syphilitic antigen from a large population. The idea was first introduced by Dorfman for testing blood samples \cite{dorfman1943detection}. Since then, the theory of group testing has been extensively studied due to its many applications in a variety of fields, such as chemical leak testing \cite{chemicalleaking}, electric shorting detection \cite{chen1989detecting}, multi-access channel communication \cite{multiaccess1}, \cite{multiaccess2}, DNA screening \cite{du2006pooling}, pattern finding \cite{patternfinding} and recently, network security \cite{networksecurity}.

Assume each of the $n$ items is associated with an undetermined binary status, positive (used to be called defective) or negative (used to be called pure). A test can be viewed as a subset of the items. The outcome of a test is $1$ (positive) whenever it contains a positive item and $0$ (negative) otherwise. The problem is to identify all positive items. Our strategy is that we group the items into several tests. In each test, a positive outcome indicates that at least one of the items included in this test is positive and a negative outcome indicates that all items included are negative. Usually the number of positive items is bounded by a positive integer $d$. To find this specified subset, one can trivially test every item individually and in this way $n$ tests are needed, which will be a waste if $d$ is much smaller than $n$. On the other hand the information-theoretic bound suggests that at least $\log\sum_{i=0}^d\bi{n}{i}\approx d\log\fr{n}{d}$ tests are needed. There is a huge gap between $n$ and $d\log\fr{n}{d}$, so we should carefully design our testing algorithms.

In general there are two types of algorithms, namely, adaptive (sequential) or nonadaptive. An adaptive algorithm is designed to several rounds and the later tests are allowed to use the outcomes of all previous ones. Conversely, a nonadaptive algorithm carries out all tests simultaneously and all positive items should be identified in a single round. The adaptive algorithms inherently require fewer tests than the nonadaptive ones since more information can be used. Asymptotically, for a nonadaptive group testing scheme, the known bounds show that at least $\Omega(\fr{d^2}{\log d}\log n)$ tests are needed \cite{DR1}, \cite{furedi}, \cite{lemmaofcff}. But for the adaptive setting, there exist algorithms with as small as $O(d\log n)$ tests, optimal up to a constant factor \cite{twostage}. However, the nonadaptive algorithms have their own advantages. They are time-saving and are encouraged in the applications that time is the most emergent issue, such as DNA screening and network security.

The application of nonadaptive group testing into molecular biology, especially in the design of screening experiment, has been extensively studied during the last twenty years. The readers are referred to the comprehensive book of Du and Hwang \cite{du2006pooling} for more detailed information. Recently, Xuan et. al. \cite{networksecurity} have found that the idea of group testing can be adapted naturally to network security. In the simplest attack scenario there are $n$ clients connecting to $t$ servers and among $n$ clients, there are $d$ attackers. Just like the detecting of positive response in a single test, once an attacker starts attacking a server, the resources of this server will be exhausted dramatically. So it is not hard to identify which server is a victim. In the security setting the nonadaptive group testing attracts more attention since it is very important to detect the defective items as soon as possible before they cause great damage to the whole network.

%\subsection{Group testing}
%The history of combinatorial group testing dates back to World War \uppercase\expandafter{\romannumeral2} when biologists needed to identify people with syphilitic antigen from a large population. Given $n$ items (or blood samples, in the previous example) with at most $r$ of which are positive, instead of testing all items individually, the theory of group testing aims to identify all the positive ones using as few tests as possible. In the group testing framework, a test strategy (or algorithm) can be adaptive or nonadaptive.

A nonadaptive group testing scheme can be represented as a $t\times n$ boolean matrix $M$ whose rows are indexed by the tests and whose columns are indexed by the items, in which $M(i,j)=1$ if the $j$-item is contained in the $i$-th test and 0 otherwise. The matrix $M$ is often designed to be a disjunct matrix. The notion of ``disjunctness" was introduced by Kautz and Singleton \cite{KS} when they were studying some important problems in information retrieval system. Later Erd\H{o}s, Frankl and F{\"u}redi \cite{CFF} introduced a combinatorial object named as ``cover-free family", whose incidence matrix was exactly a disjunct matrix.
We say a matrix is $d$-disjunct if the boolean sum of any $d$ columns does not contain any other column. In other words, a matrix is $d$-disjunct if for any $d+1$ columns indexed by $c_1,\ldots,c_{d+1}$ and for every $j=1,\ldots,d+1$, there must exist a row that has exactly one 1 in the $c_j$-th position and zeros in the other positions. It is not very difficult to see that a $d$-disjunct matrix gives a nonadaptive group testing scheme that identifies any positive set up to size $d$. On the other hand any nonadaptive group testing scheme that identifies any positive set up to size $d$ must also be a $(d-1)$-disjunct matrix \cite{duhwang2000}. Denote $t(d,n)$ the minimal $t$ such that a $t\times n$ $d$-disjunct matrix exists. The recent results of D'yachkov et. al. \cite{newbound3}, \cite{newbound2} showed that the following bounds hold asymptotically

\begin{equation*}
\fr{d^2\log n}{2\log d}(1+o(1))\le t(d,n)\le \fr{e^2d^2\log n}{4\log d}(1+o(1)).
\end{equation*}

For general $n$, it holds that $t(d,n)\ge\min\{\bi{d+2}{2},{n}\}$, which was attributed to Bassalygo by D'yachkov and Rykov \cite{DR2}. This bound implies that if $n\le\bi{d+2}{2}$ then no $d$-disjunct algorithm is superior to the trivial algorithm that tests every item individually. An interesting problem is suggested by the above result: given $d$, when does there exist a $d$-disjunct algorithm better than the trivial one? It is equivalent to ask: given $d$, what is the minimal $t$ such that there exists a $t\times n$ $d$-disjunct matrix with $n\ge t+1$. Denote this minimal $t$ by $T(d)$. Obviously, we have $T(d)\ge\bi{d+2}{2}$ and $t(d,n)\ge\min\{T(d),{n}\}$. In 1985, Erd{\"{o}}s, Frankl and F{\"u}redi conjectured that \cite{CFF}

%\begin{equation*}
%\begin{split}
%\end{split}
%\end{equation*}

\begin{eqnarray*}
\lim_{d\rightarrow\infty}T(d)/d^2=1\qquad(weaker~version),\\
T(d)\ge(d+1)^2\qquad(stronger~version),
\end{eqnarray*}

\noindent and they stated without proof that the stronger version holds for $d\le3$ and $\lim_{d\rightarrow\infty}T(d)/d^2\ge 5/6$. Note that the incidence matrix of an affine plane of order $d+1$ is a $(d+1)^2\times ((d+1)^2+(d+1))$ $d$-disjunct matrix with constant column weight $d+1$. And an affine plane of order $d+1$ exists if $d+1$ is a prime power \cite{affineplane}. This implies $\lim_{d\rightarrow\infty}T(d)/d^2\le1$ and $T(d)\le(d+1)^2$ when $d+1$ is a prime power. Later, Huang and Hwang \cite{r<=4} proved the stronger version for $d=4$, while Chen and Hwang \cite{r=5} for $d=5$. In this paper, based on a graph matching theorem of Erd\H{o}s and Gallai \cite{maintool}, we show that $T(d)\geq\frac{15+\sqrt{33}}{24}d^2$ by counting the number of specified substructures contained in the columns of the matrix. Our result significantly improves the previous ones. It is also worth mentioning that disjunct matrices with constant column weight are of particular interest in the framework of DNA screening \cite{du2006pooling}. In the thesis of Chee \cite{chee1996turan}, the author considered the above conjecture for $d$-disjunct matrix with constant column weight $d+1$ and the problem was not completely settled (see, Theorem 5.3.1 of \cite{chee1996turan}). By an easy counting argument we will verify the conjecture under this constant weight constraint.

Our main results are presented as follows.

\begin{theorem}\label{specialcase} Suppose $M$ is a $t\times n$ $d$-disjunct matrix with constant column weight $d+1$. If $n>t$, then $t\ge(d+1)^2$.
\end{theorem}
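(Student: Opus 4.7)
The plan is to run a minimal-counterexample argument, use the $d$-disjunct hypothesis to force the column system to be linear (pairwise intersections at most $1$), and then contradict constant column weight by a crude degree count. Suppose some $t \times n$ matrix is $d$-disjunct with constant column weight $d+1$, has $n > t$, and has $t \leq (d+1)^2 - 1$; take such an example with $t$ as small as possible and, by deleting excess columns, reduce to $n = t + 1$.

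The first step is to verify that every row has degree $r_x \geq 2$: the columns avoiding $x$ form a $d$-disjunct weight-$(d+1)$ matrix on $t - 1$ rows, and $r_x \leq 1$ would leave it with at least $t > t - 1$ columns, producing a strictly smaller counterexample. The key structural claim is then that any two distinct columns intersect in at most one row. If $|C \cap C'| = k \geq 2$, then $C \setminus C'$ consists of $d + 1 - k \leq d - 1$ rows, each of degree $\geq 2$. For each such row, pick a column different from $C$ that contains it; since that row is not in $C'$, the chosen column is automatically different from $C'$ as well. Together with $C'$ these furnish at most $d + 2 - k \leq d$ distinct columns whose union contains $C$, which (after padding to exactly $d$ columns if needed) violates $d$-disjunctness. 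This structural claim is the technical heart of the proof and the only step that uses the full $d$-disjunct hypothesis; the only point requiring care is that $n - 1 \geq d$ so that enough padding columns exist, which is automatic.

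With pairwise intersections at most $1$ in hand, the rest is a short count. The $r_x$ columns through a fixed row $x$ contribute $r_x$ pairwise disjoint $d$-subsets of $[t] \setminus \{x\}$, so $r_x d \leq t - 1$. Since $t - 1 \leq d^2 + 2d - 1 < d(d+2)$, integrality tightens this to $r_x \leq d + 1$ for every $x$. Summing over the $t$ rows gives $\sum_x r_x \leq t(d+1)$, whereas constant column weight forces $\sum_x r_x = n(d+1) = (t+1)(d+1)$, a strict excess. This is the sought contradiction; the entire proof hinges on the linearity claim, with the opening reduction and the closing count being elementary.
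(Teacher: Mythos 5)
Your proof is correct and takes essentially the same route as the paper's: both reduce to the situation where no column has a private row (your ``every row has degree $\ge 2$'' via minimal counterexample is the paper's Lemma~\ref{easy} reduction to no isolated columns), both then prove $|c\cap c'|\le 1$ by the identical covering argument, and both finish with the same count of ones --- you run it in contrapositive form (every row degree $\le d+1$ when $t\le(d+1)^2-1$, contradicting $\sum_x r_x=(t+1)(d+1)$), while the paper finds one row of degree $\ge d+2$ and bounds $t$ by the union of the columns through it. There are no gaps.
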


\begin{theorem} \label{main} Suppose $M$ is a $t\times n$ $d$-disjunct matrix. If $n>t$, then it holds that $t\ge\frac{15+\sqrt{33}}{24}d^2$.
\end{theorem}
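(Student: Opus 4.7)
Let $A_1,\ldots,A_n\subseteq[t]$ be the supports of the columns of $M$, let $w_i=|A_i|$, and let $r_x$ denote the row weight of $x\in[t]$. The $d$-disjunctness says that for each $i$ and each $d$-subset $J\subseteq[n]\setminus\{i\}$, some $x\in A_i$ avoids $\bigcup_{j\in J}A_j$. My plan is to attach to each column $i$ an auxiliary graph $G_i$ on $A_i$, bound its matching number using the disjunct condition, apply the Erd\H{o}s--Gallai matching theorem to control $|E(G_i)|$, and then combine these column-wise estimates with a global double count (exploiting $n>t$) to force $t\ge \frac{15+\sqrt{33}}{24}d^2$.

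\textbf{Column-wise graph and Erd\H{o}s--Gallai.} For each $i$ let $G_i$ be the graph on $A_i$ with edge set $E(G_i)=\{\{x,y\}\subseteq A_i : \exists j\neq i,\ \{x,y\}\subseteq A_j\}$. Two identities connect $G_i$ to the rest of the matrix: first, $|E(G_i)|\le \sum_{j\neq i}\binom{|A_i\cap A_j|}{2}$, and second, $\sum_{j\neq i}|A_i\cap A_j|=\sum_{x\in A_i}(r_x-1)$, tying $G_i$ back to the row weights. The crucial structural claim is that the matching number $\nu(G_i)$ is bounded in terms of $d$: a matching of size $k$ in $G_i$ with distinct witness columns $j_1,\ldots,j_k$ covers $2k$ rows of $A_i$ by $k$ columns, and combining this with the private-row structure forced by disjunctness for the remaining $d-k$ ``slots'' should yield the required bound. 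Once $\nu(G_i)$ is controlled, the Erd\H{o}s--Gallai matching theorem supplies $|E(G_i)|\le \max\bigl\{\binom{2\nu(G_i)+1}{2},\ \binom{\nu(G_i)}{2}+\nu(G_i)(w_i-\nu(G_i))\bigr\}$, converting the matching bound into an edge bound.

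\textbf{Global count, optimization, and main obstacle.} Summing the column estimates over $i$, using the identity $\sum_{i<j}|A_i\cap A_j|=\sum_x\binom{r_x}{2}$, Cauchy--Schwarz on $\sum_x r_x=\sum_i w_i$, and the hypothesis $n>t$, I expect an inequality in $t,n,d$ that, after optimizing over a free parameter (such as a threshold partitioning columns by weight, or the value of $\nu(G_i)$ achieving the Erd\H{o}s--Gallai maximum), reduces to a quadratic in $\alpha=t/d^2$. The constant $\frac{15+\sqrt{33}}{24}$ appears as the larger root of $12\alpha^2-15\alpha+4=0$, which should fall out naturally from balancing the two branches of the Erd\H{o}s--Gallai bound against the global double count. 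The main obstacle is the matching-number bound itself: a naive ``distinct witnesses'' argument only contradicts disjunctness when a matching in $G_i$ directly covers $A_i$ by $d$ columns, which is too weak to yield the claimed constant. Extracting the correct bound will almost certainly require a more delicate argument mixing matched-pair witnesses with the disjunct private rows, and perhaps a case split between columns with few non-unique rows (where Theorem~\ref{specialcase}-type reasoning applies) and columns without. Engineering the constants so the final quadratic is exactly $12\alpha^2-15\alpha+4=0$ is the second delicate point.
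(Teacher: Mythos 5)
Your skeleton is essentially the paper's: your graph $G_i$ is exactly the family $N(c)$ of non-private $2$-subsets of the column $c=A_i$ (pairs of rows of $c$ lying in some other column), you bound its matching number, apply Erd\H{o}s--Gallai, and close with a global count; you even identified the correct quadratic $12\alpha^2-15\alpha+4=0$. But the two points you flag as obstacles are genuine gaps, and each is resolved by an idea your plan is missing. First, the matching bound. The missing ingredient is the reduction to matrices with \emph{no isolated columns}: if some column contains a row belonging to it alone, deleting that column and its private rows preserves $d$-disjunctness and the hypothesis $n>t$, so one may assume every row of every column lies in at least one other column. With that in hand the bound is two lines: if $|c|=d+s$ and $N(c)$ contained $s$ pairwise disjoint pairs, the $s$ witness columns would cover $2s$ rows of $c$, and each of the remaining $d-s$ rows would be covered by one further column, so $c$ would lie in a union of $s+(d-s)=d$ other columns, contradicting $d$-disjunctness. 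Hence $\nu(N(c))\le s-1$, Erd\H{o}s--Gallai gives $|N(c)|\le\max\{\binom{2s-1}{2},\binom{d+s}{2}-\binom{d+1}{2}\}$, and so $|P(c)|=\binom{d+s}{2}-|N(c)|\ge\min\{\binom{d+1}{2},(d^2+2ds-3s^2)/2\}$. No ``more delicate argument mixing matched-pair witnesses with private rows'' is needed beyond this.

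Second, heavy columns. For $s>d$ the quantity $(d^2+2ds-3s^2)/2$ is negative, and already for $s>(2\kappa-1)d\approx 0.73d$ (where $\kappa=(15+\sqrt{33})/24$) it drops below the $\kappa d^2/2$ you need per column, so no ``threshold partitioning within a single global count'' can work: one very heavy column contributes nothing you can control. The paper's fix is induction on $d$, which your plan lacks entirely: if the heaviest column $c$ has $|c|\ge 2\kappa d$, delete $c$ and \emph{all} rows meeting it; the result is a $(t-|c|)\times(n-1)$ $(d-1)$-disjunct matrix still satisfying $n-1>t-|c|$, whence $t\ge 2\kappa d+\kappa(d-1)^2\ge\kappa d^2$. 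Otherwise every column has weight at most $2\kappa d$, so every column satisfies $|P(c)|\ge\kappa d^2/2$, and since private pairs of distinct columns are distinct $2$-subsets of $\{1,\ldots,t\}$, one gets $\binom{t}{2}\ge\sum_c|P(c)|\ge(t+1)\kappa d^2/2$ and hence $t\ge\kappa d^2$. Your Cauchy--Schwarz and row-weight identities are a red herring; the only global fact used is that each private pair is counted once. The constant then comes from balancing the induction threshold $2\kappa d$ against the worst case $s=(2\kappa-1)d$ of the per-column bound, which is precisely your quadratic.
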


The rest of this paper is organised as follows. In Section 2 we will prove Theorem \ref{specialcase} and in Section 3 we will prove Theorem \ref{main}. We conclude this paper in Section 4.

\section{A simple bound for constant weight matrix}

For a $t\times n$ binary matrix $M$, 1s and 0s can represent the incidence structure of a corresponding set system. Let $T=\{1,\ldots,t\}$ be a set of $t$ elements and let $\mathcal{F}=\{F_1,\ldots,F_n\}\s 2^T$ be a collection of subsets of $T$. Then $M$ can be viewed as the incidence matrix of $(T,\mathcal{F})$ such that for all $1\le i\le t$, $1\le j\le n$, $i\in F_j$ if and only if $M(i,j)=1$. We can simply replace the set $F_j$ by the column $c_j$, then
we just write $i\in c_j$ if $M(i,j)=1$, which also indicates that row $i$ is contained in column $c_j$.
A column of $M$ is called isolated if there exists a row incident to it but not to any other column. If $M$ is $d$-disjunct and has an isolated column $c$, then by deleting $c$ and the isolated row contained in it we get a $(t-1)\times (n-1)$ matrix $M'$ which maintains the $d$-disjunctness.
Then $n>t$ holds for the original matrix $M$ is equivalent to $n-1>t-1$ holds for the new matrix $M'$. By the definition of $T(d)$, the minimal $t$ satisfying $(n-1)>(t-1)$ is at least $T(d)+1$. We can summarise this observation as the following lemma.

\begin{lemma}\label{easy}
Suppose $M$ is a $t\times n$ $d$-disjunct matrix with an isolated column $c$. If $n>t$, then $t>T(d)$.
\end{lemma}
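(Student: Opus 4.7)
The plan is to formalise the informal observation that precedes the lemma: remove the isolated column $c$ together with the (essentially unique) row that witnesses its isolation, verify that $d$-disjunctness survives, and then quote the definition of $T(d)$.

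First I would fix notation. Let $r$ be a row such that $M(r,c)=1$ and $M(r,c')=0$ for every column $c'\neq c$; such a row exists by the definition of an isolated column. Let $M'$ be the $(t-1)\times(n-1)$ matrix obtained from $M$ by deleting row $r$ and column $c$. The dimensions are clearly what is claimed, and the hypothesis $n>t$ translates into $n-1>t-1$.

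The key step is to verify that $M'$ is still $d$-disjunct. Pick any $d+1$ columns $c_{j_1},\ldots,c_{j_{d+1}}$ of $M'$ (so none of them equals $c$) and fix an index $k$. Applied to these $d+1$ columns as columns of $M$, the $d$-disjunctness of $M$ produces a row $r^*$ of $M$ containing $c_{j_k}$ but none of the other $c_{j_i}$. I must rule out the possibility $r^*=r$: but by construction $M(r,c_{j_k})=0$ since $c_{j_k}\neq c$, so $r^*\neq r$ and therefore $r^*$ survives in $M'$. Hence $M'$ inherits $d$-disjunctness.

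Finally, $M'$ is a $(t-1)\times(n-1)$ $d$-disjunct matrix with $n-1>t-1$, so by the very definition of $T(d)$ as the minimum number of rows of such a matrix, we have $t-1\ge T(d)$, i.e.\ $t>T(d)$. There is essentially no obstacle here: the only place one has to be a little careful is checking that the witness row of $d$-disjunctness can always be chosen to avoid the deleted isolated row, and this is forced automatically by the fact that the isolated row has zeros outside column $c$.
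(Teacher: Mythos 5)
Your proof is correct and follows essentially the same approach as the paper: delete the isolated column together with a witnessing row, check that $d$-disjunctness survives (which you verify more carefully than the paper does), and invoke the definition of $T(d)$. The only cosmetic difference is that the paper deletes all $r_c\ge 1$ isolated rows of $c$ at once, obtaining $t\ge T(d)+r_c$, whereas you delete a single one and obtain $t\ge T(d)+1$; both give the claimed $t>T(d)$.
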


\begin{proof}
Deleting column $c$ and the corresponding isolated rows yields a $(t-r_c)\times (n-1)$ $d$-disjunct matrix, where $r_c\ge1$ is the number of isolated rows contained in $c$. Then by the definition of $T(d)$ we have $t-r_c\ge T(d)$ and hence $t\ge T(d)+r_c$ since $n-1>t-r_c$.
\end{proof}

Therefore, to determine $T(d)$ we only need to consider the matrices with no isolated columns. The weight of a column $c$, denoted as $|c|$, is defined to be the number of 1s contained in it. One can see that a non-isolated column in a $d$-disjunct matrix has weight at least $d+1$, since any 1 in this column is contained in some other columns. So for a matrix with no isolated columns, the minimal weight of the columns is at least $d+1$. Theorem \ref{specialcase} establishes the validity for the stronger version of the conjecture in the simplest case, i.e., the $d$-disjunct matrix being considered is of constant column weight $d+1$.

We present the proof of Theorem \ref{specialcase} as follows.

%\begin{theorem}\label{specialcase} Suppose $M$ is a $t\times n$ $d$-disjunct matrix with constant column weight $d+1$. If $n>t$, then $t\ge(d+1)^2$.
%\end{theorem}

\begin{proof}[\textbf{Proof of Theorem \ref{specialcase}}] We can always assume that $M$ has no isolated columns by Lemma \ref{easy}. Then for arbitrary two distinct columns $c,c'$ it is easy to see $|c\cap c'|\le 1$. Denote $C(i)$ as the collection of columns that has a 1 in the $i$-th row.
By counting the number of 1s in the whole matrix we get $\sum_{i=1}^t|C(i)|=n(d+1)\ge(t+1)(d+1)$. Therefore, there exists some $1\le i_0\le t$ such that $|C(i_0)|\ge\lc\fr{(d+1)(t+1)}{t}\rc\ge d+2$. Note that $c\cap c'=\{i_0\}$ holds for all distinct $c,c'\in C(i_0)$, then the theorem follows from $t\ge |\vee_{c\in C(i_0)} c|=1+(d+2)d=(d+1)^2$, where $\vee$ denotes the boolean sum (the union) of the columns.
\end{proof}

%\begin{remark} Similar result can also be obtained as a consequence of Theorem 3.6 of [f r huang].
%\end{remark}

\section{A general bound for $T(d)$}
%\begin{theorem} \label{main} Suppose $M$ is a $t\times n$ $d$-disjunct matrix. If $n>t$, then it holds that $t\ge\frac{15+\sqrt{33}}{24}d^2$.
%\end{theorem}

Suppose $K$ is a $k$-element set, we use $\binom{K}{\lambda}$ to denote the collection of all $\lambda$-element subsets of $K$, where $1\leq \lambda\leq k$ is a positive integer. Let $\ml{G}\s\bi{K}{\lambda}$ be a family of $\lambda$-element subsets of $K$. The matching number $v(\ml{G})$ is defined to be the maximum number of pairwise disjoint members of $\ml{G}$. One of the classical problems of extremal set theory is to determine $\max|\ml{G}|$ for fixed $v(K)$.
Define $m(k,\lambda,\mu)=\max\{|\mathcal{G}|:\ml{G}\s \binom{K}{\lambda},~|K|=k,~v(\ml{G})\le\mu\}$. In 1959, Erd{\H{o}}s and Gallai \cite{maintool} determined $m(k,\lambda,\mu)$ for $\lambda=2$ (see, Theorem 4.1 of \cite{maintool})

\begin{lemma}{\rm(\cite{maintool})} \label{maintoo}  $m(k,2,\mu)\leq\max\{\binom{2\mu+1}{2},\bi{k}{2}-\binom{k-\mu}{2}\}$ for $k\ge 2\mu+1$.
\end{lemma}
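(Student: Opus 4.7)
The statement is the classical Erd\H{o}s--Gallai edge bound for graphs of bounded matching number, and I would attack it by induction on $\mu$, viewing $\mathcal{G}$ as a graph $G$ on the $k$-vertex set $K$ with matching number $v(G)\leq\mu$. The two competing terms in the claimed bound correspond to the two natural extremal graphs: a clique on $2\mu+1$ vertices padded with $k-2\mu-1$ isolated vertices (contributing $\binom{2\mu+1}{2}$ edges), and the graph obtained by designating a $\mu$-element vertex cover $S$ and including every edge incident to $S$ (contributing $\binom{k}{2}-\binom{k-\mu}{2}$ edges).

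The base $\mu=1$ is immediate: a graph with matching number at most $1$ is a star, a triangle, or empty (together with isolated vertices), and the bound reduces to $\max(3,k-1)$.

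For the inductive step, suppose the bound holds for $\mu-1$ and let $G$ on $k\geq 2\mu+1$ vertices satisfy $v(G)=\mu$. Fix a maximum matching $M=\{a_1b_1,\ldots,a_\mu b_\mu\}$ and set $A=V(M)$, $B=V(G)\setminus A$ with $|B|=k-2\mu\geq 1$. The decisive structural observation, stemming from the absence of length-three augmenting paths, is that for every $i$ there cannot exist distinct $x,y\in B$ with $xa_i,\,yb_i\in E(G)$; hence for each $i$ either one of $N_B(a_i),N_B(b_i)$ is empty or both equal a common singleton $\{z_i\}$, and $B$ is independent. I would then distinguish two regimes. In the \emph{uniformly low} regime, every pair $e_i$ contributes at most $2$ edges between $A$ and $B$, which immediately yields $|E(G)|\leq\binom{2\mu}{2}+2\mu=\binom{2\mu+1}{2}$, matching the first term. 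In the \emph{some-high} regime there is some pair $a_ib_i$ with an endpoint, say $a_i$, satisfying $|N_B(a_i)|\geq 2$ (forcing $N_B(b_i)=\emptyset$ by the structural fact); I would then remove the pair $\{a_i,b_i\}$, apply the inductive hypothesis to the resulting graph on $k-2$ vertices (whose matching number is at most $\mu-1$ by a short Berge-type argument using the two distinct $B$-neighbors of $a_i$), and add back the at most $\deg(a_i)+\deg(b_i)-1$ restored edges to reach the claimed bound $\binom{k}{2}-\binom{k-\mu}{2}$.

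The principal obstacle is the arithmetic bookkeeping that glues the two regimes together: after invoking the inductive hypothesis, which itself produces a maximum of two expressions, one must verify that the resulting quantity does not exceed the claimed maximum at level $\mu$. This is most delicate in the boundary region $k\approx 2\mu+1$, where the two extremal families have comparable sizes and one has to be careful that the ``wrong'' term of the induced max does not accidentally dominate. A secondary subtlety is confirming that the chosen vertex-pair removal genuinely drops the matching number to $\mu-1$ rather than leaving it at $\mu$; this uses the fact that $a_i$ possessing at least two $B$-neighbors would otherwise yield an augmenting path with respect to any hypothetical matching of size $\mu$ in the reduced graph, contradicting maximality of $M$ in $G$.
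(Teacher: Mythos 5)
First, a point of reference: the paper offers no proof of this lemma at all --- it is quoted directly as Theorem 4.1 of Erd\H{o}s and Gallai \cite{maintool} --- so your attempt can only be judged on its own terms. Your structural groundwork is correct and standard: $B$ is independent, the absence of length-three augmenting paths forces $N_B(a_i)$ and $N_B(b_i)$ to coincide in a single vertex when both are nonempty, the identity $\binom{2\mu}{2}+2\mu=\binom{2\mu+1}{2}$ settles the uniformly low regime, and the base case $\mu=1$ is fine. The problem is the inductive step in the some-high regime, and it is not ``arithmetic bookkeeping'' but a missing idea. Deleting $\{a_i,b_i\}$ restores at most $\deg(a_i)+\deg(b_i)-1$ edges, and the only degree bounds your structure provides are $\deg(a_i)\le(2\mu-1)+(k-2\mu)=k-1$ and $\deg(b_i)\le 2\mu-1$ (since $N_B(b_i)=\emptyset$). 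The inductive term satisfies $\binom{k-2}{2}-\binom{k-\mu-1}{2}=\binom{k}{2}-\binom{k-\mu}{2}-(k+\mu-2)$, so adding back $(k-1)+(2\mu-1)-1=k+2\mu-3$ edges lands you at $\binom{k}{2}-\binom{k-\mu}{2}+\mu-1$, overshooting the target by $\mu-1$ for every $\mu\ge2$. Closing the gap requires $\deg(a_i)+\deg(b_i)\le k+\mu-1$, in effect $\deg(b_i)\le\mu$; this is exactly what happens in the extremal ``cover'' graph, but establishing it needs further augmenting-path arguments constraining the adjacencies of $b_i$ inside $A$ that you have not supplied. A smaller remark: the ``Berge-type argument'' for $v(G-a_i-b_i)\le\mu-1$ is unnecessary --- any $\mu$-matching of $G-a_i-b_i$ together with the edge $a_ib_i$ is a $(\mu+1)$-matching of $G$; the two $B$-neighbours of $a_i$ play no role there.

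If you want a self-contained argument that actually closes, the cleanest is via the Berge--Tutte formula rather than vertex deletion. From $v(G)\le\mu$ one obtains a set $S$, $|S|=s\le\mu$, with $o(G-S)\ge k-2\mu+s$; bounding the edges inside $S$, between $S$ and the rest, and inside the components of $G-S$ (at least $k-2\mu+s$ parts summing to $k-s$, so at most $\binom{2(\mu-s)+1}{2}$ internal edges) gives $e(G)\le\binom{s}{2}+s(k-s)+\binom{2(\mu-s)+1}{2}$. This is convex in $s$, hence maximized at $s=0$ or $s=\mu$, and those endpoints are precisely the two terms $\binom{2\mu+1}{2}$ and $\binom{k}{2}-\binom{k-\mu}{2}$ of the stated bound.
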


%~\ml{G}~does~not~contain~\mu+1~pairwise~ disjoint~members

A very important notion in studying disjunct matrix is ``privateness", which was introduced as ``own part" in \cite{CFF}. For a
given matrix $M$, a subset of $\{1,\ldots,t\}$ is private if it belongs to a unique column. On the contrary, a subset of $\{1,\ldots,t\}$ is called non-private if it belongs to at least two columns. When proving Theorem \ref{specialcase}, we actually consider the private 1-subsets since a column is isolated if and only if it contains a private 1-subset. In order to establish our general bound, we investigate the properties of private 2-subsets. More precisely, a lower bound for the number of private 2-subsets that a column must contain is obtained. For a column $c$, denote $P(c)=\{T\s\{1,\ldots,t\}:|T|=2,~T\s c~and ~T~is~private\}$ as the collection of private 2-subsets contained in $c$ and denote $N(c)$ as the collection of non-private 2-subsets contained in $c$. If column $c$ has weight $k$, then $\bi{k}{2}=|P(c)|+|N(c)|$ since $P(c)$ and $N(c)$ partition all 2-subsets of $c$. The lemma below presents an upper bound for the size of $N(c)$.

\begin{lemma}\label{claim} Suppose $M$ is a $t\times n$ $d$-disjunct matrix with no isolated columns. Then for any arbitrary column $c$ satisfying $|c|=d+s$, where $1\leq s\leq d-1$, it holds that $|N(c)|\leq m(d+s,2,s-1)\leq\max\{\bi{2s-1}{2},\bi{d+s}{2}-\bi{d+1}{2}\}$.
\end{lemma}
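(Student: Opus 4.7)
The plan is to view $N(c)$ as an ordinary graph on the vertex set $c$ (a set of size $d+s$), with edges being the non-private $2$-subsets contained in $c$, and to prove that its matching number $v(N(c))$ is at most $s-1$. Once this is in hand, Lemma \ref{maintoo} applied with $k=d+s$, $\lambda=2$, $\mu=s-1$ yields the stated upper bound on $|N(c)|$; the hypothesis $k\ge 2\mu+1$ translates to $d\ge s-1$, which is implied by the assumption $s\le d-1$.

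To bound the matching number I would argue by contradiction: assume $N(c)$ contains a matching of size $s$, consisting of pairwise disjoint pairs $\{i_1,j_1\},\ldots,\{i_s,j_s\}\s c$. By the definition of $N(c)$, for every $1\le k\le s$ there is a column $D_k\neq c$ with $\{i_k,j_k\}\s D_k$, and these $s$ columns together cover the $2s$ rows used by the matching. The remaining $d+s-2s=d-s$ rows of $c$ are as yet uncovered, but since $M$ has no isolated columns, every such row lies in at least one column other than $c$; picking such a column per row provides $D_{s+1},\ldots,D_d$. The list $D_1,\ldots,D_d$ thus gives $d$ (not necessarily distinct) columns, all different from $c$, whose boolean sum contains $c$.

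Discarding duplicates and, if necessary, padding with arbitrary further columns of $M$ to reach exactly $d$ distinct columns other than $c$ (possible because $n$ is far larger than $d$ in every regime where the lemma is of interest), we obtain $d$ distinct columns, none of which is $c$, whose boolean sum contains $c$---a direct contradiction with the $d$-disjunctness of $M$. Hence $v(N(c))\le s-1$, and Lemma \ref{maintoo} finishes the proof. The main conceptual step is combining $d$-disjunctness with the non-isolation hypothesis to convert a large matching in $N(c)$ into a covering of $c$ by too few other columns; the rest---tracking multiplicities among the $D_k$ and verifying the parameter hypothesis of Lemma \ref{maintoo}---is routine bookkeeping.
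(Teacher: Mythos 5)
Your proposal is correct and follows essentially the same argument as the paper: assume a matching of size $s$ in $N(c)$, cover those $2s$ rows by $s$ other columns, cover the remaining $d-s$ rows by $d-s$ other columns (using the absence of private $1$-subsets), and contradict $d$-disjunctness, then invoke Lemma \ref{maintoo}. The extra bookkeeping about distinctness/padding and the check of the hypothesis $k\ge 2\mu+1$ are fine but not needed beyond what the paper's two-line proof already contains.
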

\begin{proof} It suffices to show $N(c)$ does not contain $s$ pairwise disjoint members. If otherwise, the left $(d+s)-2s=d-s$ 1s of $c$ is contained in the union of some $d-s$ columns of $M$ since $c$ has no private 1-subsets. Then $c$ is contained in the union of some $s+(d-s)=d$ columns, which violates the $d$-disjunct property.
\end{proof}

For $s\geq 1$, by direct computation one can verify the following formula holds
\begin{equation}
       \begin{aligned}&\max\{\bi{2s-1}{2},\bi{d+s}{2}-\bi{d+1}{2}\}\\&=
       \begin{cases} \bi{d+s}{2}-\bi{d+1}{2}, & s\leq\fr{2}{3}d+\frac{2}{3},\\\bi{2s-1}{2}, & s\geq\fr{2}{3}d+\frac{2}{3}.
       \end{cases}
       \end{aligned}
\end{equation}
One more lemma is needed to prove Theorem \ref{main}.

%Theorem \ref{main} is proved as follows.

\begin{lemma}{\label{disjunct}}
Suppose $M$ is a $t\times n$ $d$-disjunct matrix. Assume $c$ is an arbitrary column of $M$ with weight $w_c$, then deleting $c$ and all rows intersecting it yields a $(t-w_c)\times(n-1)$ $(d-1)$-disjunct matrix.
\end{lemma}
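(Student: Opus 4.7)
The plan is a direct reduction: start from any collection that would witness a failure of $(d-1)$-disjunctness in the new matrix, then lift it back to the original matrix $M$, add the deleted column $c$ to the collection of "witnesses," and apply the hypothesis that $M$ is $d$-disjunct to extract a distinguishing row that survives the row deletion.

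More concretely, let $M'$ denote the $(t-w_c)\times(n-1)$ matrix obtained by removing $c$ and the $w_c$ rows on which $c$ equals $1$. Pick an arbitrary column $c_0'$ of $M'$ together with an arbitrary collection of $d-1$ other columns $c_1',\dots,c_{d-1}'$ of $M'$. To prove $(d-1)$-disjunctness, I must exhibit a row of $M'$ at which $c_0'$ equals $1$ while $c_1',\dots,c_{d-1}'$ all equal $0$. Lift each $c_i'$ to its corresponding column $c_i$ of $M$; these $d$ columns together with $c$ form a family of $d+1$ distinct columns of $M$.

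Now I apply $d$-disjunctness of $M$ with target $c_0$ and witnesses $\{c_1,\dots,c_{d-1},c\}$: there must be a row $r$ of $M$ at which $c_0$ equals $1$ while $c_1,\dots,c_{d-1}$ and $c$ all equal $0$. The crucial observation is that because $c$ is $0$ at row $r$, the row $r$ was not deleted when we passed from $M$ to $M'$. Hence $r$ is present in $M'$, and restricted to $M'$ it still witnesses that $c_0'$ is $1$ there while $c_1',\dots,c_{d-1}'$ are $0$. This completes the verification.

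There is no serious obstacle: the only subtlety is making sure that the row singled out by $d$-disjunctness of $M$ is one of the surviving rows of $M'$, and this is immediate from the fact that we added $c$ itself to the witness set, forcing the witnessing row to lie outside the support of $c$. The dimension count $(t-w_c)\times(n-1)$ is also immediate from the construction.
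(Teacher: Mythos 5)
Your proof is correct: adding the deleted column $c$ to the witness set and invoking $d$-disjunctness of $M$ forces the distinguishing row to lie outside the support of $c$, hence to survive the row deletion, which is exactly the standard argument. The paper itself gives no proof here, merely citing Lemma 2.2.2 of the Du--Hwang book, and your argument is the one that reference uses, so there is nothing to add beyond noting that when $n-1<d$ one applies the same reasoning with fewer witness columns (which $d$-disjunctness of $M$ still covers).
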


\begin{proof}See Lemma 2.2.2 of \cite{du2006pooling}.
\end{proof}

\begin{proof}[\textbf{Proof of Theorem \ref{main}}] Again, we can assume that $M$ has no isolated columns by Lemma \ref{easy}. Then the minimal column weight of $M$ is at least $d+1$. We will apply induction on $d$ to prove the theorem. Our statement is true for $1\le d\le 5$ by previous results.
Assume the statement is true for $d-1$. Let $c$ be the column with the largest column weight and for the sake of simplicity, denote $\kappa=(15+\sqrt{33})/24$. Then our goal is to prove $t\geq \kappa d^2$.
The proof can be divided into two cases:

  \textbf{Case 1.} $|c|\geq\lc 2\kappa d\rc$. By Lemma \ref{disjunct}, deleting $c$ and all rows intersecting it we get a $(t-|c|)\times (n-1)$ $(d-1)$-disjunct matrix. Obviously, $n-1>t-|c|$ since $n>t$. By the induction hypothesis we can deduce that $t\ge |c|+\kappa(d-1)^2\ge 2\kappa d+\kappa(d-1)^2\ge \kappa d^2.$

  \textbf{Case 2.} $|c|\le \lfloor 2\kappa d\rfloor$. Then every column of $M$ has weight at most $\lfloor2\kappa d\rfloor$. Fix a column $u$ with $|u|=d+s$, where $1\leq s\leq(2\kappa-1)d$. Let us estimate the number of private 2-subsets contained in $u$. On one hand, if $|u|\leq\frac{5d}{3}+\frac{2}{3}$, then by the first formula of (1) we have $|P(c)|=\bi{d+s}{2}-|N(c)|\geq\bi{d+1}{2}$. On the other hand, if $|u|>\frac{5d}{3}+\frac{2}{3}$, then $2d/3\leq s\leq(2\kappa-1)d$, by the second formula of (1) we have $|P(c)|\geq\bi{d+s}{2}-\binom{2s-1}{2}\geq(d^2+2ds-3s^2)/2\geq(3\kappa-1)(2-2\kappa)d^2=\kappa d^2/2$. Note that $|P(c)|\ge \kappa d^2/2$ holds in both cases since $\kappa<1$. Then the statement follows from the fact that the number of private 2-subsets in $\{1,\ldots,t\}$ can not exceed $\bi{t}{2}$, i.e., $\bi{t}{2}\ge\sum_{c}|P(c)|\geq n\times \kappa d^2/2\geq(t+1)\kappa d^2/2$.
\end{proof}

The following result is straightforward.

\begin{corollary}
Denote $t(d,n)$ the minimal $t$ such that there exists a $t\times n$ $d$-disjunct matrix. Then it holds that $t(d,n)\ge\min\{\frac{15+\sqrt{33}}{24}d^2,n\}$.
\end{corollary}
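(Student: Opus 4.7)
The plan is to deduce the corollary as a direct consequence of Theorem \ref{main} together with the definitional relationship between $t(d,n)$ and $T(d)$ already flagged earlier in the introduction (``$t(d,n)\ge\min\{T(d),n\}$''). The inequality to be justified is $t(d,n)\ge\min\{\kappa d^2,n\}$ with $\kappa=(15+\sqrt{33})/24$, so the work really reduces to making that one-line observation precise and then invoking the main theorem.

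First I would set up the dichotomy. Either $t(d,n)\ge n$, in which case the inequality is trivial, or $t(d,n)<n$. In the second case, by the very definition of $t(d,n)$ there exists a $t(d,n)\times n$ $d$-disjunct matrix with $n>t(d,n)$; so this matrix is a witness that a $d$-disjunct matrix with strictly more columns than rows exists with $t(d,n)$ rows. By the definition of $T(d)$ as the minimal such $t$, this forces $t(d,n)\ge T(d)$. Combining both cases gives $t(d,n)\ge\min\{T(d),n\}$.

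Next I would invoke Theorem \ref{main}, which asserts $T(d)\ge\kappa d^2$, and substitute to conclude $t(d,n)\ge\min\{\kappa d^2,n\}$, as desired.

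The hard part is already absorbed into Theorem \ref{main}; there is no remaining obstacle for the corollary itself beyond carefully distinguishing the two defining quantities $t(d,n)$ and $T(d)$ and making sure the definitional reduction in the case $t(d,n)<n$ is stated unambiguously. In particular, no new combinatorial argument, no use of Lemma \ref{maintoo}, Lemma \ref{claim}, or Lemma \ref{disjunct}, and no induction is required at this stage.
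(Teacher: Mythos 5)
Your proposal is correct and follows essentially the same route as the paper: the paper's proof is the one-line observation that $t(d,n)\ge\min\{T(d),n\}$ combined with Theorem \ref{main}, and your case analysis merely spells out why that definitional inequality holds. No further comment is needed.
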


\begin{proof}
The corollary holds since $t(d,n)\ge\min\{T(d),n\}$.
\end{proof}

Through a similar argument to that of Theorem \ref{main}, one can prove the following corollary.
\begin{corollary} Suppose $M$ is a $t\times n$ $d$-disjunct matrix. If $n>t$ and for every column $c$ of $M$, there is $|c|\leq\lfloor\fr{5d}{3}\rfloor$. Then it holds that $t>d^2+d+1$.
\end{corollary}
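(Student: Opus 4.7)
The plan is to adapt the Case~2 counting argument from the proof of Theorem \ref{main}, exploiting the observation that the hypothesis $|c|\le\lfloor 5d/3\rfloor$ forces every column into the first branch of formula (1); this removes the need for induction on $d$ and instead turns the problem into a single clean counting estimate plus an algebraic inequality.

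First I would reduce to the case where $M$ has no isolated columns by repeatedly invoking the deletion used in Lemma \ref{easy}: removing an isolated column together with its isolated rows preserves $d$-disjunctness, preserves the upper bound $\lfloor 5d/3\rfloor$ on column weights, and never decreases $n-t$ (since one column and at least one row are removed), so it preserves $n>t$. It therefore suffices to prove $t>d^2+d+1$ under the assumption that $M$ has no isolated columns. Each column $c$ then has weight $d+1\le|c|\le\lfloor 5d/3\rfloor$, so writing $|c|=d+s$ we have $1\le s\le 2d/3\le 2d/3+2/3$. Applying Lemma \ref{claim} and the first branch of (1) gives $|N(c)|\le\bi{d+s}{2}-\bi{d+1}{2}$, and hence the uniform lower bound $|P(c)|=\bi{d+s}{2}-|N(c)|\ge\bi{d+1}{2}$ on the number of private 2-subsets in any column.

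Since every private 2-subset of $\{1,\ldots,t\}$ lies in exactly one column, summing over all $n$ columns and using $n\ge t+1$ yields $\bi{t}{2}\ge\sum_c|P(c)|\ge(t+1)\bi{d+1}{2}$, which simplifies to $t(t-1)\ge(t+1)d(d+1)$. The conclusion then follows from a one-line algebraic check: substituting $t=d^2+d+1$ gives $t(t-1)-(t+1)d(d+1)=-d(d+1)<0$, so the inequality fails at $t=d^2+d+1$ and hence must force $t>d^2+d+1$. There is no serious obstacle; the only things to watch are the boundary arithmetic $\lfloor 5d/3\rfloor-d\le 2d/3+2/3$ (immediate since $\lfloor 5d/3\rfloor\le 5d/3$) ensuring that every column lies in the first branch of (1), and the verification that the isolated-column reduction is compatible with the column-weight cap (clear, since deleting rows only weakly decreases column weights).
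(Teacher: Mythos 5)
Your proposal is correct and follows essentially the same route as the paper: reduce to no isolated columns, use the first branch of (1) to get $|P(c)|\ge\binom{d+1}{2}$ for every column, and conclude from $\binom{t}{2}\ge\sum_c|P(c)|\ge(t+1)\binom{d+1}{2}$. You simply make explicit two details the paper leaves implicit, namely that the isolated-column reduction preserves the weight cap and the final algebraic check that $t(t-1)\ge(t+1)d(d+1)$ forces $t>d^2+d+1$.
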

\begin{proof} By (1) we have $|P(c)|\geq \bi{d+1}{2}$ for every non-isolated column $c$. Then the conclusion follows from
$\bi{t}{2}\ge\sum_{c}|P(c)|\geq n\bi{d+1}{2}\geq(t+1)\bi{d+1}{2}$.
\end{proof}

\section{Concluding remarks}

In this paper we consider the lower bound of the minimal $t$ when there exists a $t \times n$ $d$-disjunct matrix with $n > t$,  and our new bound improves the previous results significantly. The novelty of our method is that we consider the properties of private 2-subsets of the given disjunct matrix and apply a graph matching theorem of Erd{\H{o}}s and Gallai \cite{maintool}. A natural idea to generalize our method is to consider larger private subsets and then a hypergraph version of matching theorem will be needed \cite{hypergraphmatching}. It will be interesting if someone can improve our results in this way.

\bibliographystyle{IEEEtranS}
\bibliography{Ref}
\begin{IEEEbiographynophoto}{Chong~Shangguan}
is currently a Ph.D. student at Zhejiang University, Hangzhou,
Zhejiang, P. R. China. His research interests include
extremal combinatorics, coding theory, and
their interactions.
\end{IEEEbiographynophoto}

\begin{IEEEbiographynophoto}{Gennian~Ge}
received the M.S. and Ph.D. degrees in mathematics from Suzhou
University, Suzhou, Jiangsu, P. R. China, in 1993 and 1996,
respectively. After that, he became a member of Suzhou
University. He was a postdoctoral fellow in the Department of
Computer Science at Concordia University, Montreal, QC,
Canada, from September 2001 to August 2002, and a visiting
assistant professor in the Department of Computer Science at
the University of Vermont, Burlington, Vermont, USA, from
September 2002 to February 2004.  He was  a
full professor in the Department of Mathematics at Zhejiang
University, Hangzhou, Zhejiang, P. R. China, from
March 2004 to February 2013.  Currently, he
is a full professor in the School of Mathematical Sciences at  Capital Normal University, Beijing, P. R. China.  His research
interests include the constructions of combinatorial designs
and their applications to codes and crypts.

Dr. Ge is on the Editorial Board of {\em Journal of
Combinatorial Designs, SCIENCE CHINA Mathematics,
Applied Mathematics$-$A Journal of Chinese Universities}. He received the 2006
Hall Medal from the Institute of Combinatorics and its
Applications.
\end{IEEEbiographynophoto}

\end{document}